\def\tsc#1{\csdef{#1}{\textsc{\lowercase{#1}}\xspace}}
\algnewcommand\algorithmicinput{\textbf{Input:}}
\algnewcommand\algorithmicoutput{\textbf{Output:}}
\algnewcommand\Input{\item[\algorithmicinput]}
\algnewcommand\Output{\item[\algorithmicoutput]}
\newcommand{\mRefFig}[1]{Fig.~\ref{#1}}
\newcommand{\mRefEq}[1]{Equation~(\ref{#1})} 
\newcommand{\erdos}{Erd\H{o}s-R$\acute{\text{e}}$nyi }
\newcommand{\pp}{p}
\newcommand{\s}{s}
\newcommand{\mG}{G}
\newcommand{\mE}{\mathcal{E}}
\newcommand{\mV}{\mathcal{V}}
\newcommand{\mN}{n}
\newcommand{\mFtr}{\Phi} 
\newcommand{\mNbr}{\mathcal{N}} 
\newcommand{\mdeg}{\mathcal{D}} 
\newcommand{\mDeg}{\mdeg} 
\newcommand{\mDist}{t} 
\newcommand{\mMaxDist}{\lambda} 
\newcommand{\mFtrLen}{\theta} 
\newcommand{\diGa}{\mdeg_{i}^{a}}
\newcommand{\mNbrC}[3]{\mNbr^{\,#1,#2}_{#3}}
\newcommand{\mNormC}[2]{\vert\vert \mFtr^{a}_{#1} - \mFtr^{b}_{#2}\vert\vert_{2}}
\newcommand{\mSim}{X} 
\newtheorem{proposition}{Proposition}
\begin{document}
\let\WriteBookmarks\relax
\def\floatpagepagefraction{1}
\def\textpagefraction{.001}
\shorttitle{Seedless Graph Matching via Tail of Degree Distribution}
\shortauthors{Bozorg et~al.}

\title [mode = title]{Seedless Graph Matching via Tail of Degree Distribution for Correlated \erdos Graphs}

\author[1]{Mahdi Bozorg}
\author[1]{Saber Salehkaleybar}
\author[1]{Matin Hashemi}

\address[1]{Department of Electrical Engineering, Sharif University of Technology, Tehran, Iran}

\nonumnote{The authors are with the Learning and Intelligent Systems Laboratory, Department of Electrical Engineering, Sharif University of Technology, Tehran, Iran. Webpage: http://lis.ee.sharif.edu, E-mails: mehdi.bozorg@ee.sharif.edu, saleh@sharif.edu (corresponding author), matin@sharif.edu.}

\begin{abstract}
The network alignment (or graph matching) problem refers to recovering the node-to-node correspondence between two correlated networks. In this paper, we propose a network alignment algorithm which works without using a seed set of pre-matched node pairs or any other auxiliary information (e.g., node or edge labels) as an input. The algorithm assigns structurally innovative features to nodes based on the tail of empirical degree distribution of their neighbor nodes. Then, it matches the nodes according to these features. We evaluate the performance of proposed algorithm on both synthetic and real networks. For synthetic networks, we generate \erdos graphs in the regions of $\Theta(\log(n)/n)$ and $\Theta(\log^{2}(n)/n)$, where a previous work theoretically showed that recovering is feasible in sparse \erdos graphs if and only if the probability of having an edge between a pair of nodes in one of the graphs and also between the corresponding nodes in the other graph is in the order of $\Omega(\log(n)/n)$, where $n$ is the number of nodes. Experiments on both real and synthetic networks show that it outperforms previous works in terms of probability of correct matching.
\end{abstract}


\begin{highlights}
\item Proposing a network alignment (graph matching) algorithm, which requires neither any seed set of pre-matched nodes, nor any auxiliary node or edge information. 
\item Solving the problem solely based on structural similarities between the two graphs, in specific, based on the tail of empirical degree distribution as node features. 
\item Significantly improving the probability of correct matching compared to previous methods in \erdos graphs.
\end{highlights}

%

\begin{keywords}
Graph Matching \sep Network Alignment \sep \erdos Graphs
\end{keywords}

\maketitle

\section{Introduction}
\label{sec:introduction}

Graph  matching (or network alignment) between two correlated networks is the problem of finding bijection mapping between the nodes in one network to the nodes in the other network according to structural similarities between them. 
If the two networks have exactly the same structure, the problem reduces to the graph isomorphism problem, but in general, the two networks are only similar, which makes the problem more challenging.  

Network alignment arises in various applications in different fields including computer vision \cite{cho2012progressive},  pattern recognition \cite{conte2004thirty}, autonomous driving \cite{slam_2019}, computational biology \cite{elmsallati2016global,seah2014dualaligner}, and social networks \cite{narayanan2009anonymizing}. 
For instance, in computational biology, protein-protein interactions (PPI) can be modeled as networks. PPI networks of different species can be aligned by solving the network alignment problem which can be useful in investigating evolutionary conserved pathways or reconstructing phylogenetic trees \cite{kuchaiev2010topological}. 

Network alignment algorithms can be classified from different aspects like seed-based algorithms, and seedless algorithms. Seed-based network alignment algorithms work based on a set of pre-matched nodes from the two networks, called seeds \cite{kazemi2016network,mossel2019seeded,zhang2018efficient}, while seedless algorithms do not require any seed set as input \cite{conte2004thirty}. 
Moreover, in order to assist the matching procedure, some algorithms employ node or edge features as a side information (e.g., user names or locations in de-anonymization of social networks \cite{malhotra2012studying, nunes2012resolving}), while some other matching algorithms do not require such prior knowledge and only utilize the structural similarities between the two networks as the most important feature in solving the problem \cite{henderson2011s}. 
In this paper, we propose a \textbf{seedless} network alignment algorithm which \textbf{does not require} either any input seed set, or any input features for the nodes or edges as side information. In other words, the proposed algorithm works solely based on structural similarities between the two correlated networks. 

%
Most of the seed-based network alignment algorithms rely on the idea of percolation, in which the algorithm starts from a small set of pre-matched nodes (seeds), and gradually expands the set of matched nodes by applying some rules on the neighbor nodes of previously matched nodes. 
The pioneering method in this category, which succeeded in de-anonymizing a social network with millions of nodes, was introduced by Narayanan and Shmatikov \cite{narayanan2009anonymizing}. They empirically observed that the proposed algorithm is very sensitive to the size of the seed set. If the size of seed set is too small, the algorithm could not percolate, but if the size exceeds a threshold, the algorithm could successfully percolate and de-anonymize a large portion of the entire network. 
Yartseva and Grossglauser \cite{yartseva2013performance} later proved that such phenomenon happens in random bigraph models. 
Later, Kazemi et al. \cite{kazemi2015growing} proposed a percolation-based method called NoisySeed algorithm. The main advantage of this algorithm, as the name implies, is that the initial seed set can include some incorrectly matched pairs as well. The required size for the seed set as well as the tolerable number of incorrect matches have been investigated in \cite{kazemi2015growing}.

Compared with the above solutions, the seedless algorithms do not require pre-matched node pairs as an input. 
In the literature, several seedless methods have been proposed based on convex relaxations of network alignment problem. For instance, in \cite{lyzinski2015graph}, alignment problem is relaxed as a \linebreak quadratic programming problem, and then, the solution is projected into zeros and ones in order to recover the mapping between nodes of two networks.
%
Some other seedless algorithms rely on computing graph edit distance between the two networks, which is basically the minimum number of edge deletions or insertions required to convert one of the networks to the other one \cite{conte2004thirty, fernandez2001graph}. 
Methods based on convex relaxations or graph edit distance are often much more time consuming than other seedless network alignment algorithms \cite{degreeProfile}. 

Spectral methods are another type of seedless algorithms which align nodes based on eigenvalues and eigenvectors of a transformation of the network's adjacency matrix \cite{carcassoni2002alignment, leordeanu2005spectral}. The main idea in these methods is to obtain Laplacian matrices from adjacency matrices of the two networks and then compute the eigenvectors and eigenvalues of these Laplacian matrices. Next, $k$ number of eigenvectors corresponding to top $k$ eigenvalues are selected to construct a $k$-dimensional feature vector for every node. From these feature vectors, the nodes in two networks can be aligned based on a distance metric.

Besides to the above seedless algorithms, several \linebreak machine-learning based algorithms have been proposed that match nodes based on a set of features which are extracted by processing additional information from nodes, e.g., user-names or locations in social networks \cite{abel2010interweaving, egozi2013probabilistic, nunes2012resolving}. As mentioned before, the proposed method in this paper works \linebreak merely based on structural similarities between the two networks, and does not require any additional features.

Recently, few seedless network alignment algorithms \linebreak have been proposed for \erdos graphs. 
Barak et al. \cite{barak2018nearly} presented a matching algorithm that finds certain small sub-networks that appear in both networks, based on which a set of seeds is formed accordingly. Next, a percolation algorithm extends the selected seeds to match all the nodes. 
This algorithm is designed for \erdos graphs with average node degrees in the range $[n^{o(1)}, n^{1/153}]$ or $[n^{2/3}, n^{1-\epsilon}]$, where $\epsilon$ is a small positive constant. This range covers very sparse or very dense \erdos graphs. Compared with this algorithm, the proposed solution in this paper works on \erdos graphs with average node degrees of order $\log(n)$. In fact, it has been shown that the true graph matching can be recovered with high probability if and only if the average node degree is in the order of $\Omega(\log(n))$ \cite{cullina2016improved}. Thus, our proposed algorithm can work for the minimum value of average node degree that is possible to find the correct matching.

Dai et al. \cite{dai2018performance} proposed another network alignment algorithm for \erdos graphs called canonical labeling. In the first step of this algorithm, the nodes in the two networks are sorted according to their degrees. Then, the top $h$ highest degree nodes in two networks are aligned based on the sorted lists. 
In the second step, each remaining node $j$ gets a binary vector of length $h$. Entry $i$ of this vector is equal to one if node $j$ is connected to $i$-th node in the sorted list. Otherwise, this entry is set to zero. The nodes are then aligned according to these binary feature vectors. Our experiments show that the canonical labeling does not have good performance in the networks with average node degrees of order $\log(n)$ or even $\log^{2}(n)$. 
Ding et al. \cite{degreeProfile} proposed a network alignment algorithm for \erdos graphs with average node degree in three regions including $\Theta(\log^{2}(n))$. In this algorithm, every node is assigned a feature vector containing empirical degree distribution of its neighbors. Then, the minimum distance on these features are used to match the nodes. This algorithm has a relatively higher accuracy in \erdos graphs with average degree of $\log(n)$, but our experiments show that it has lower performance for the graphs with average node degree of order $\log^2(n)$. 

Beside to the mentioned algorithms for \erdos \linebreak graphs, several graph matching algorithms have been proposed with specific applications in PPI networks, social networks, and image databases. Singh et al. \cite{singh2008global} introduced a well-known network alignment algorithm in PPI networks, which is named IsoRank. In this algorithm, the similarity of a node $i$ in one of the network to a node $j$ in the other network depends on how similar are the neighbor nodes of node $i$ to the neighbor nodes of node $j$. More specifically, in the first step of this algorithm, the similarity matrix $R$ is constructed iteratively where entry $R_{ij}$ indicates similarity of node $i$ in one of the network to node $j$ in the other network. In each iteration, entry $R_{ij}$  is computed from other entries in $R$ like $R_{uv}$, where $u$ and $v$ are neighbor nodes of $i$ and $j$ in the two networks, respectively. In the second step, nodes in two networks are aligned according to $R$. 
Later, Zhang et al. \cite{zhang2016final} proposed a network alignment algorithm, called Final algorithm. The Final algorithm can work on both node and edge attributed networks or simple networks without any auxiliary information. Furthermore, this algorithm uses prior knowledge of pairwise alignment preference $H$ matrix, where each entry in this matrix shows likelihood of aligning two corresponding nodes from two input networks. If this prior knowledge is not given, all entries of $H$ are set to the same value, i.e., a uniform distribution. This algorithm iteratively minimizes an objective function, which is constructed from network structure (i.e, adjacency matrix) and nodes and edges attributes.
Zhang et al. \cite{zhang2019multilevel} proposed another network algorithm called Moana. This algorithm aligns nodes in three steps. First, it coarsens the input networks to a structured representation. Next, it aligns the coarsened representation. Finally, the alignment at multi levels is obtained including node level by interpolation. 

Recently, several works \cite{Fey2020Deep,Yu2020Learning} with the applications in the fields of computer vision, used graph neural networks in order to obtain node embedding vectors and match the nodes based on them. These works utilized extracted features from images as inputs to the graph neural network to facilitate the process of graph matching.

In this paper, we propose a seedless network alignment algorithm, which works without any auxiliary information.
The proposed algorithm has two main steps: In the first step, for each node $i$ in any of two correlated networks, we construct a feature vector containing degrees of nodes like $j$ having the following two properties: I) Node $j$ should be in the neighborhood of node $i$. II) Its degrees is in the tail of empirical degree distribution of nodes in neighborhood of node $i$. Due to this property of the proposed algorithm, we call it ``Tail Degree Signature (TDS)" network alignment algorithm. In the second step, we compute a distance metrix between any pair of feature vectors to generate the matrix of distances. Then we use a greedy algorithm or the Hungarian algorithm \cite{kuhn1955hungarian}) to align nodes from the constructed distance matrix. We evaluate the performance of TDS algorithm for both synthetic and real networks. For synthetic networks we select \erdos graphs with average degree of order $\log(n)$ and $\log^2(n)$, which are difficult regions for the network alignment problem \cite{cullina2016improved}.
Experiments show that the proposed TDS algorithm outperforms other related works in both real-world networks and synthetic \erdos graphs with average node degree of order $\log(n)$ and also $\log^2(n)$.

\begin{figure*}
	\begin{center}
		\renewcommand{\arraystretch}{1.4}
		
		\centerline{\includegraphics[width=.9\textwidth]{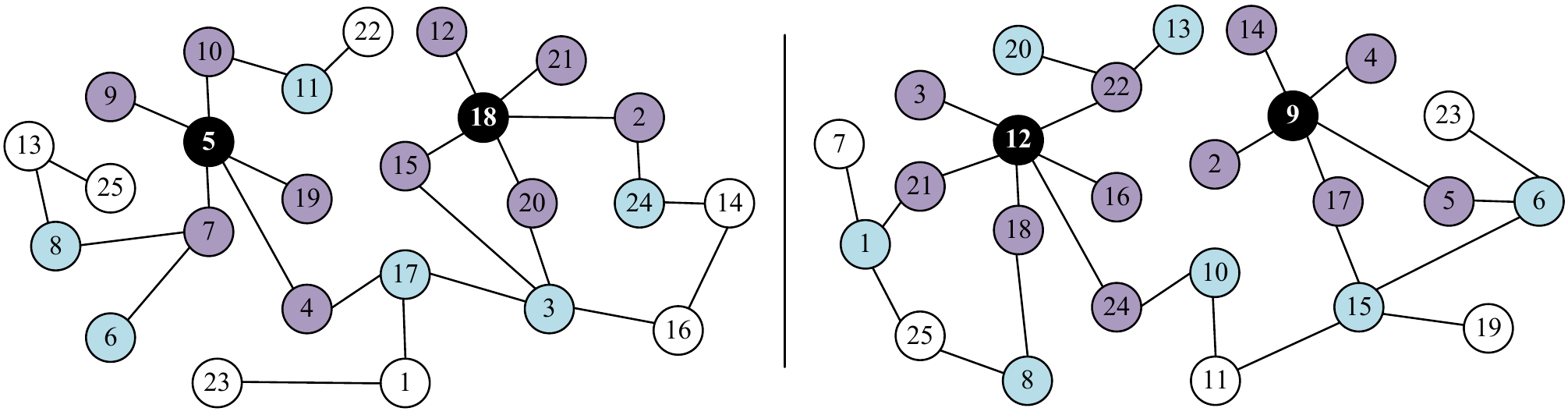}}
		(a)
		\vskip 4mm
		
		\definecolor{mPurple}{RGB}{106,84,133}
		\definecolor{mBlue}{RGB}{44,118,138}
		\centering
		\begin{tabular}{|l|l|rr}
			\cline{1-2}
			\color{mPurple}Distance $t=1$ from $i=18$ & \color{mBlue}Distance $t=2$ from $i=18$    &                             & Feature Vectors:\\
			\cline{1-2}
			$~~~~~~\mNbrC{a}{1}{18}=\{15,12,21,~2,20\}$ & $~~~~~~\mNbrC{a}{2}{18}=\{3,24\}$        &                             & $\mFtr^{b}_{9} =  [1,2,3,4]$\\
			$\mDeg(\mNbrC{a}{1}{18})=\{~2,~~1,~~1,~~2,~2\}$ & $\mDeg(\mNbrC{a}{2}{18})=\{4,~2\}$   &                             & $\mFtr^{b}_{12} = [1,3,1,3]$\\
			$~~~~~~~\,\mFtr^{a,1}_{18}=[1,2]$& $~~~~~~~\,\mFtr^{a,2}_{18}=[2,4]$                   &                             & $\mFtr^{a}_{5} =  [1,3,1,3]$\\
			\cline{1-2}
			\multicolumn{2}{|c|}{$\mFtr^{a}_{18}=\mFtr^{a,1}_{18} | \mFtr^{a,2}_{18} = [1,2,2,4]$} &$\xrightarrow{\hspace*{1cm}}$& $\mFtr^{a}_{18} = [1,2,2,4]$\\
			\cline{1-2}
		\end{tabular}
		\\
		\vskip 4mm
		(b)
		
		\begin{tabular}{c|c|c|c|c|c|}
		\multicolumn{2}{r}{$\cdots$}&\multicolumn{3}{c}{$j=9~~~~~~~~~~~~~~~\cdots~~~~~~~~~~~~~~~j=12$}&\multicolumn{1}{c}{$\cdots$}\\		\cline{2-6}
		$\cdots$ & \multicolumn{5}{c|}{~} \\ 																					\cline{3-3} \cline{5-5}			
		$i=5$    & $\cdots$          & $\mNormC{5}{9}=2.45$  & $\cdots$ & $\mNormC{5}{12}=0.00$  & $\cdots$  \\					\cline{3-3} \cline{5-5}
		$\cdots$ & \multicolumn{5}{c|}{~} \\																					\cline{3-3} \cline{5-5}
		$i=18$   & $\cdots$          & $\mNormC{18}{9}=1.00$ & $\cdots$ & $\mNormC{18}{12}=1.73$ & $\cdots$  \\					\cline{3-3} \cline{5-5}
		$\cdots$ & \multicolumn{5}{c|}{~} \\ 																								\cline{2-6}
		\end{tabular}
		\\
		\vskip 4mm
		(c)
		
	\end{center}
	\caption{
		Feature vectors $\mFtr$ for a selected subset of the nodes, in specific, for nodes $18$ and $5$ in graph $\mG_a$ and nodes $12$ and $9$ in graph $\mG_b$. 
		In this simple example, $\mN=25$, $\mMaxDist=2$, and $\mFtrLen=1$. 
		(a) Neighbors at distance $1$ and $2$ are marked in purple and blue colors, respectively. 
		(b) The computed feature vectors $\mFtr$ for the selected nodes. Computing $\mFtr^a_{18}$ is shown in details. 
		(c) Part of the similarity matrix. 
	}
	\label{fig:methodParts}
\end{figure*}

\section{Problem Definition}
\label{sec:problem}

Network alignment is problem of identifying a bijection mapping between nodes in two structurally similar graphs. Let $\mG_a(\mV_a,\mE_a)$ and $\mG_b(\mV_b,\mE_b)$ be two graphs with node sets $\mV_a$ and $\mV_b$ of size $\mN$, and edge sets $\mE_a$ and $\mE_b$. We denote the edge between nodes $i$ and $j$ by $(i,j)$. 
Let mapping function $\pi:\mV_a \rightarrow \mV_b$ denote a one-to-one mapping between nodes of $\mG_a$ and $\mG_b$. 
The goal in the graph matching problem is to select a matching $\hat{\pi}$ from  $\mN!$ different possible mapping functions in the symmetric group $S_n$ such that:

\begin{equation}
\hat{\pi} = \underset{\pi\in S_n}{\operatorname{argmin}} \; \|A(\mG_b)-P_{\pi}^{T}A(\mG_a)P_{\pi}\|^2_F,
\label{eq:problem_def}
\end{equation}

\noindent where $\|.\|_F$ is Frobenius norm and $A(\mG_a)$ and $A(\mG_b)$ are the adjacency matrices for $\mG_a$ and $\mG_b$, respectively. Moreover, the matrix $P_{\pi}^{T}A(\mG_a)P_{\pi}$ is a simultaneous row/column  permuted version of $A(\mG_a)$, and $P_{\pi}$ is the permutation matrix corresponding to mapping $\pi$ which is defined as:  

\begin{equation}
P_{\pi}[i,j] = \left\{
\begin{array}{lcl}
1 & : & i\in \mV_a,~ j\in \mV_b,~ j = \pi(i),\\
0 & : & \text{otherwise.}
\end{array}
\right.
\end{equation}

In other words, the objective function in \mRefEq{eq:problem_def} measures the number of mis-matched edges between relabeled version of graph $G_a$ based on mapping $\pi$ and graph $G_b$. In the worst case, solving the above optimization problem is NP-hard \cite{rendl1994quadratic}.

For synthetic graphs, we assume that $\mG_a$ and $\mG_b$ are two correlated  \erdos graphs where the original graph \linebreak $\mG(\mV,\mE)$ is generated with parameter $\pp$, i.e, there is an edge between any two nodes with probability $\pp$. Then, two correlated graphs $\mG_a$ and $\mG_b$ are constructed where edge sets $\mE_a$ and $\mE_b$ are sampled from $\mE$ with probability $\s$. In other words, every edge in edge set $\mE$ is in $\mE_a$ and $\mE_b$ with probability $\s$, independently. The vertex set $\mV_a$ is the same as $\mV$, but $\mV_b$ is a permuted version of $\mV$ according to mapping $\pi^{*}$. The matching algorithm tries to recover $\pi^{*}$ given only $\mG_a$ and $\mG_b$. For correlated \erdos graphs, it can be shown \cite{kazemi2016network} that maximum a-posteriori (MAP) estimation is equivalent to minimizing the objective function in \mRefEq{eq:problem_def}. Furthermore, MAP estimator finds the ground truth matching, i.e., $\hat{\pi}=\pi^*$ with high probability if and only if $ps^2=\Omega(\log(n)/n)$ \cite{cullina2016improved}. Hence,  no matching algorithm can return the correct output for values less than $\Omega(\log(n)/n)$.

\section{Tail Degree Signature (TDS) Algorithm}
\label{sec:alg}

Our proposed graph matching algorithm consists of two steps: I) For every node in both graphs, a feature vector is extracted. II) Based on these feature vectors, the nodes in the two subsets are matched.

\subsection{Feature Extraction}
\label{sec:alg:feature}

\noindent \textbf{Method:} 
For every node $i\in \mV_a$, we extract a feature vector $\mFtr^a_i$ based on its neighbor nodes in $\mG_a$ as follows: 
Let $\mNbrC{a}{\mDist}{i}$ be the set of nodes in graph $\mG_a$ whose distance from node $i$ is exactly equal to $\mDist$, where $\mDist \in [1,\mMaxDist]$, and $\mMaxDist$ is the maximum distance that is considered in the feature extraction procedure. For every node $i\in \mV_a$ and every $\mDist \in [1,\mMaxDist]$, set $\mDeg(\mNbrC{a}{\mDist}{i})$ is formed as the  degrees of the nodes in $\mNbrC{a}{\mDist}{i}$, i.e.,

\begin{equation}
\mDeg(\mNbrC{a}{\mDist}{i}) = \{ 
\mdeg^{a}_{i'} ~\vert~ i' \in \mNbrC{a}{\mDist}{i} 
\}.
\end{equation}

Next, for a given integer parameter $\theta$, we pick $\mFtrLen$ of the smallest and $\mFtrLen$ of the largest elements in $\mDeg(\mNbrC{a}{\mDist}{i})$ and put them in feature vector $\mFtr^{a,\mDist}_i$ of size $2 \, \mFtrLen$. 
Finally, feature vector $\mFtr^a_i$ is formed by concatenating vectors $\mFtr^{a,\mDist}_i$ as follows:

\begin{equation}
\mFtr^a_i = \mFtr^{a,1}_i ~|~ \mFtr^{a,2}_i ~|~ \cdots ~|~ \mFtr^{a,\mMaxDist}_i.
\end{equation}

Thus, $\mFtr^a_i$ is a vector of size $2 \, \mFtrLen  \mMaxDist$. By a similar procedure, for every node $j\in \mV_b$, feature vector $\mFtr^b_j$ is also formed. 
\mRefFig{fig:methodParts}(a) shows two example graphs $\mG_a$ and $\mG_b$. 
\mRefFig{fig:methodParts}(b) shows the construction procedure of $\mFtr^a_{18}$ where $\mNbrC{a}{t}{18}$ and $\mDeg(\mNbrC{a}{t}{18})$ are generated according to \mRefFig{fig:methodParts}(a). As few other examples, $\mFtr^a_{5}$, $\mFtr^b_{12}$ and $\mFtr^b_{9}$ are also shown in \mRefFig{fig:methodParts}(b).

\vskip 2mm
\noindent \textbf{Rationale:} 
In constructing the vector $\mFtr^a_i$, we select the degree of nodes in $\mNbrC{a}{\mDist}{i}$ which are in the \textbf{tail} region of empirical degree distribution of nodes in $\mNbrC{a}{\mDist}{i}$. Herein, we give an intuition why such selection is more preferable than considering nodes' degrees outside of this region for $t=1$.  For the original graph $\mG(\mV,\mE)$, i.e, there is an edge between any two nodes in $\mG(\mV,\mE)$ with probability $\pp$. Then $\mG_a$ and $\mG_b$ are constructed where edge sets $\mE_a$ and $\mE_b$ are sampled from $\mE$ with probability $\s$. Thus, $\mG_a$ and $\mG_b$ are two \erdos graphs with $\s\pp$ probability. It can be seen that the degree distribution of node $i$ in graph $\mG_a$ or $\mG_b$ is approximately a normal distribution $N(\mu,\sigma^2)$ with parameters $\mu=(n-1)ps$ and $\sigma=\sqrt{(n-1)(1-ps)ps}$. Let $U_i^a$ be the normalized degree of node $i$ in graph $\mG_a$, i.e., $U_i^a=(\diGa-\mu)/\sigma$. $U_i^b$ is defined similarly in graph $\mG_b$. 

\begin{proposition}
If node $j\in\mV_b$ is the corresponding node of a node $i\in \mV_a$, i.e., $j=\pi^*(i)$, then $U_i^a$ and $U_j^b$ are two correlated random variables with the  correlation coefficient:  $\rho=s(1-p)/(1-ps)$. Otherwise, they are approximately uncorrelated for large $n$.
\label{Prop:2}
\end{proposition}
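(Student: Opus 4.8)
The plan is to write each node degree as a sum over its $\mN-1$ potential neighbors of products of the Bernoulli variables that drive the two-stage generative model, and then obtain $\rho$ from a term-by-term covariance computation; the normalization by $\mu$ and $\sigma$ only converts this covariance into a correlation coefficient and leaves it unchanged.

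First I would fix the per-edge randomness. For each pair $\{i,k\}$ let $B_{ik}\sim\mathrm{Bernoulli}(\pp)$ indicate that $(i,k)\in\mE$, and let $S_{ik}^{a},S_{ik}^{b}\sim\mathrm{Bernoulli}(\s)$ be mutually independent indicators of that edge surviving into $\mG_a$ and $\mG_b$. The structural point is that $\pi^{*}$ aligns edges, so the degree of $i$ in $\mG_a$ and the degree of $j=\pi^{*}(i)$ in $\mG_b$ are built from the \emph{same} family of originals:
\begin{equation}
\diGa=\sum_{k\neq i}B_{ik}S_{ik}^{a},\qquad \mdeg_{j}^{b}=\sum_{k\neq i}B_{ik}S_{ik}^{b}.
\end{equation}
Since the triples $(B_{ik},S_{ik}^{a},S_{ik}^{b})$ are i.i.d.\ across $k$, each summand has mean $\pp\s$ and variance $\pp\s(1-\pp\s)$ (reproducing $\mu$ and $\sigma$), and the per-edge cross term is $\mathbb{E}[B_{ik}S_{ik}^{a}\,B_{ik}S_{ik}^{b}]=\pp\s^{2}$, giving per-edge covariance $\pp\s^{2}(1-\pp)$.

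Next I would assemble the two cases. Terms for distinct edges are independent, so for the matching pair all $\mN-1$ shared edges contribute and $\mathrm{Cov}(\diGa,\mdeg_{j}^{b})=(\mN-1)\pp\s^{2}(1-\pp)$; dividing by $\sigma^{2}=(\mN-1)\pp\s(1-\pp\s)$ yields exactly $\rho=\s(1-\pp)/(1-\pp\s)$. For a non-matching pair, writing $j=\pi^{*}(i')$ with $i'\neq i$, the two degree sums share only the single original edge $\{i,i'\}$, so their covariance is just $\pp\s^{2}(1-\pp)$ and the correlation is $\rho/(\mN-1)=O(1/\mN)\to 0$. Standardizing to $U_i^a,U_j^b$ preserves these correlations, establishing both claims.

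The computations above are exact; the main obstacle is the bookkeeping of the second step, namely verifying precisely which original edges are common to the two degree sums once $\pi^{*}$ is applied --- all $\mN-1$ in the matched case versus the lone edge $\{i,i'\}$ in the mismatched case. The remaining ``approximately'' qualifiers (normality of each marginal and joint bivariate normality of $(U_i^a,U_j^b)$) follow from the classical and bivariate central limit theorems applied to these i.i.d.\ sums, which I would invoke rather than prove.
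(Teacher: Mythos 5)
Your proof is correct and takes essentially the same route as the paper's: both decompose each degree into a sum of per-edge indicators, use independence across distinct node pairs together with the per-matched-edge cross moment $ps^{2}$ to obtain the covariance $(n-1)ps^{2}(1-p)$, and divide by $\sigma^{2}=(n-1)ps(1-ps)$ to get $\rho = s(1-p)/(1-ps)$. Your explicit factorization into $B_{ik}S^{a}_{ik}$ and your identification of the single shared edge $\{i,i'\}$ in the mismatched case merely spell out details the paper handles verbally (it asserts $\rho = s(1-p)/\big((n-1)(1-ps)\big)$ for $j\neq\pi^{*}(i)$ without derivation), and you arrive at exactly the same quantities.
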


\begin{proof}
To prove the above statement, it is just needed to compute the following term for the two correlated random variables $U_i^a$ and $U_{\pi^*(i)}^b$:
\begin{equation}
\begin{split}
\mathbb{E}\Big[\diGa &\mathcal{D}^b_{\pi^*(i)}\Big]\\
&=\mathbb{E}\Bigg[\sum_{k\neq i} \mathbbm{1}[(i,k)\in \mE_a]  \sum_{k'\neq \pi^*(i)}\mathbbm{1}[(\pi^*(i),k')\in \mE_b]\Bigg]\\
&\stackrel{(a)}{=}\big((n-1)^2-(n-1)\big)(ps)^2\\
&\qquad+\sum_{k\neq i}\mathbb{E}\Big[\mathbbm{1}[(i,k)\in \mE_a] \mathbbm{1}[(\pi^*(i),\pi^*(k))\in \mE_b]\Big]\\
&\stackrel{(b)}{=}\big((n-1)^2-(n-1)\big)(ps)^2 +(n-1)ps^2,
\end{split}
\end{equation} 
where $\mathbbm{1}[.]$ is the indicator function.\\
(a) Due to the fact that the events ${1}[(i,k)\in \mE_a]$ and \linebreak $\mathbbm{1}[(\pi^*(i),k')\in \mE_b]$ are independent for $k'\neq\pi^*(k)$.\\
(b) The probability of existing an edge between nodes $i$ and $k$ in the original graph $\mG$ is equal to $p$. Moreover, the probability of having that edge in both graphs $\mG_a$ and $\mG_b$ is $s^2$. Hence, the expectation of event in the second sum would be $ps^2$.

Thus, the correlation coefficient between $\diGa$ and $\mathcal{D}^b_{\pi^*(i)}$ would be:
\begin{equation}
\begin{split}
\rho=\frac{\mathbb{E}\Big[\diGa \mathcal{D}^b_{\pi^*(i)}\Big]-\mu^2}{\sigma^2}=s(1-p)/(1-ps).
\end{split}
\end{equation}	
Similarly, for the case of $j\neq \pi^*(i)$, it can be shown that $\rho=s(1-p)/((n-1)(1-ps))$. Hence, the two random variables are approximately uncorrelated for large $n$ if $j\neq \pi^*(i)$.
\end{proof}

Based on the above observation, we can model the two random variables $U_i^a$ and  $U^b_{\pi^*(i)}$ as  $U^b_{\pi^*(i)}=\rho U_i^a + \sqrt{1-\rho^2} Z$ where $Z$ is an independent standard normal variable. To show the advantage of selecting nodes' degrees in the tail of degree distribution, we define the following two metrics between any two nodes $i\in \mV_a$ and $j\in \mV_b$: 
\begin{equation}
\begin{split}
\Delta^{i,j}_{tail}&= \frac{1}{2}\int^{-\frac{1}{2}}_{-\infty} |\hat{p}_{U_i^a}(x)-\hat{p}_{U_j^b}(x)|dx\\
&+ \frac{1}{2}\int_{\frac{1}{2}}^{\infty} |\hat{p}_{U_i^a}(x)-\hat{p}_{U_j^b}(x)|dx\\
\Delta^{i,j}_{center}&=\frac{1}{2}\int^{\frac{1}{2}}_{-\frac{1}{2}} |\hat{p}_{U_i^a}(x)-\hat{p}_{U_j^b}(x)|dx,
\end{split}
\end{equation} 
where $\hat{p}_{U_i^a}(x)$ and $\hat{p}_{U_j^b}(x)$ are the empirical distribution obtained from observing samples of $U_i^a$ and $U_j^b$, respectively. In fact, $\Delta^{i,j}_{tail}$ and $\Delta^{i,j}_{center}$ represent the total variation distances \cite{levin2017markov} of $\hat{p}_{U_i^a}(x)$ and $\hat{p}_{U_j^b}(x)$  in the tail and central domains of distributions, respectively. For a given node $i$, we are interested in comparing $\Delta_{tail}^{i,\pi^*(i)}$ with $\Delta_{tail}^{i,j}$ for any $j\neq \pi^*(i)$. We define the score  $s^{i,j}_{tail}=\Delta_{tail}^{i,\pi^*(i)}/\Delta_{tail}^{i,j}$ for any $j\neq \pi^*(i)$ and $j\in \mV_b$. We expect to have better matching results for higher $s^{i,j}_{tail}$. The score $s^{i,j}_{center}$ is defined similarly. We compare the average of $s^{i,j}_{tail}$ and $s^{i,j}_{center}$ experimentally by generating $100$ samples of $U_i^a$ and $U_j^b$ for two correlation coefficients $\rho=s(1-p)/(1-ps)$ and $\rho=0$. From these samples, one instance of $s^{i,j}_{tail}$ and $s^{i,j}_{center}$ can be computed. \mRefFig{fig:tail_score} shows the average of $s^{i,j}_{tail}$ and $s^{i,j}_{center}$ over $100$ instances against parameter $s$ for $n=1000$ and $p=\log(n)/n$. As can be seen, the score of tail region is about $40\%$ greater than the one for the central region. This observation illustrates that the empirical degree distribution in the tail region is much more robust to sampling parameter $s$.

\begin{figure}[tp]
	\begin{center}
		\includegraphics[width = 0.95\columnwidth]{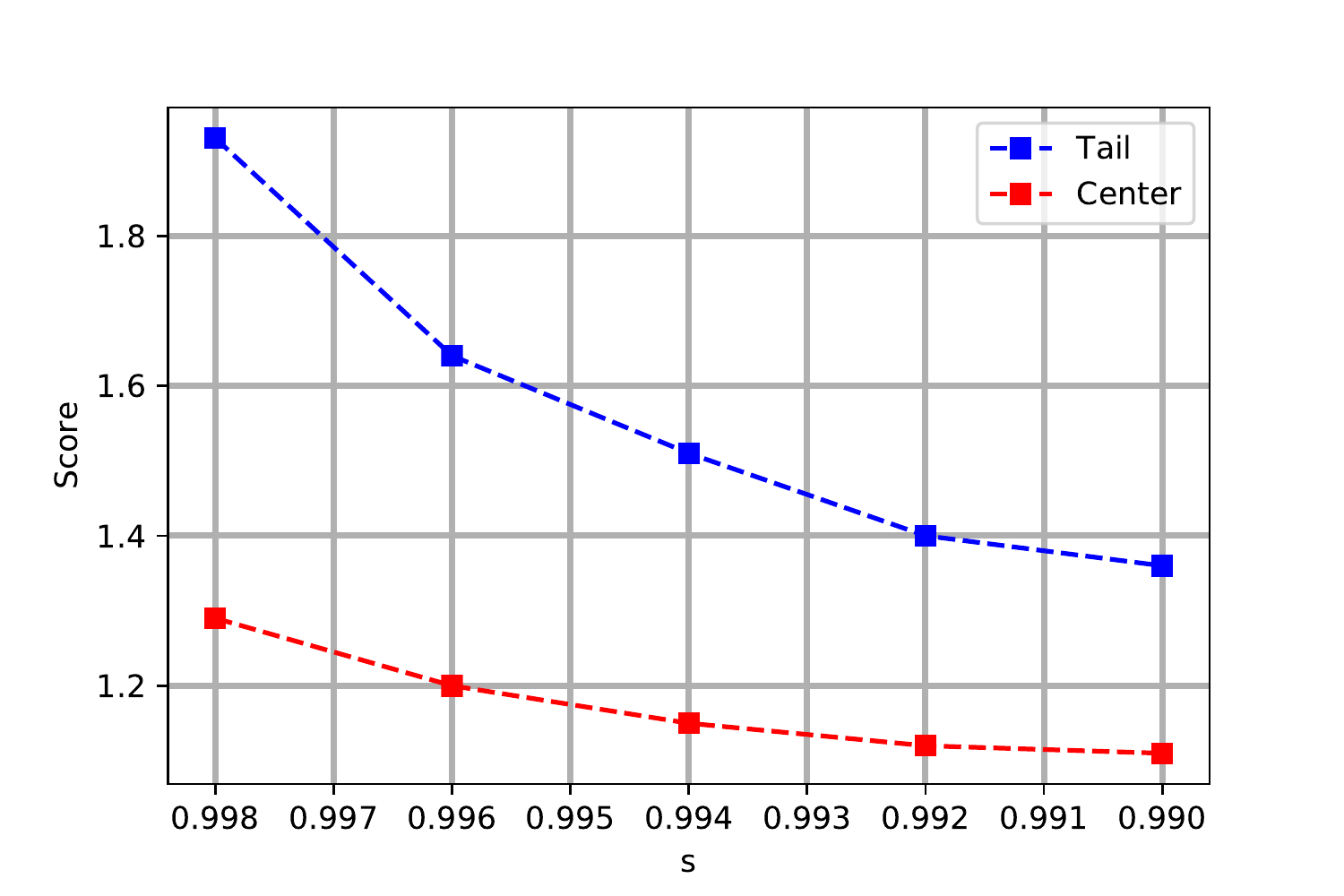}
	\end{center}
	\vskip -4mm
	\caption{The average of $s^{i,j}_{tail}$ and $s^{i,j}_{center}$ over $100$ instances against parameter $s$ for $n=1000$ and $p=\log(n)/n$. 
	}	
	\vskip -1mm
	\label{fig:tail_score}
\end{figure}

\vskip 2mm
\noindent \textbf{Complexity Analysis:} 
For every node $i\in \mV_a$, we run BFS algorithm with root node $i$ and obtain all nodes in $\mNbrC{a}{\mDist}{i}$ for every $t\in [1,\lambda]$. Since the average degree of each node is in the order of $O((n-1)ps)$, the average number of neighbor nodes up to distance $\lambda$ is in the order of $O((nps)^{\lambda})$. Thus, the time complexity of this part is $O((nps)^{2\lambda})$. Moreover, it takes $O(t(nps)^t\log(nps))$ to sort the nodes in $\mNbrC{a}{\mDist}{i}$ and construct $\mFtr^{a,t}_i$. Therefore, the total time complexity of the feature extraction step for all $\mN$ nodes is in the order of $O \Big( \big( (nps)^{2\lambda}+\lambda (nps)^{\lambda} \log(nps) \big) \times \mN \Big)$. For $p=\log(n)/n$ and $\lambda=2$, the time complexity is simplified to $O(n\log^4(n))$. For $p=\log^2(n)/n$, it is in the order of $O(n\log^8(n))$.

\subsection{Matching Method}
\label{sec:alg:seed}

First, we compute similarity matrix (or distance matrix) $\mSim$ between $\mV_a$ and $\mV_b$. In particular, element $(i,j)$ in this matrix is equal to: $\mSim_{ij} = \mNormC{i}{j}$ where $i \in \mV_a$ and $j \in \mV_b$.

Next, we form the set of matched pairs between $\mV_a$ and $\mV_b$, by executing Hungarian algorithm on the similarity matrix $X$. More specifically, Hungarian algorithm selects $\mN$ number of entries from matrix $X$, where from each column and each row, exactly one entry is chosen and the selected entries minimize the following cost:
\begin{equation}
cost=\frac{1}{\mN}\sum X_{ij}.
\label{eq:hungarian_cost}
\end{equation}

In other words, by running Hungarian algorithm, we form a mapping $\pi$ between $\mV_a$ and $\mV_b$ that has minimum mean of similarity distance over all possible choices. We call this version of the proposed method as ``TDS-h algorithm''.

Another option, instead of using Hungarian algorithm, is to use the following simple greedy algorithm. We select the minimun element $X_{ij}$ in matrix $X$ and align node $i \in \mG_a$ with node $j \in \mG_b$, and delete row $i$ and column $j$ from matrix $X$. This process is repeated $n$ times.  We call this version of the proposed method as ``TDS-g algorithm''.

As an example, \mRefFig{fig:methodParts}(c) shows $l_2$-norm distances between constructed feature vectors $\mFtr^{a}_{i}$ and $\mFtr^{b}_{j}$ from \mRefFig{fig:methodParts}(b). Four elements of the similarity matrix are shown in the figure. Either of the two matching methods can be applied. In this example, nodes $5$ and $18$ in graph $\mG_a$ are matched to nodes $12$ and $9$ in graph $\mG_b$, respectively.

\vskip 2mm
\noindent \textbf{Complexity Analysis:} 
Time complexity of finding similarity matrix $X$ is in the order of $O(\mN^{2})$. Both matching methods are in the order of $O(\mN^{3})$.

\section{Experimental Evaluation}
\label{sec:exp}

The proposed seedless graph matching algorithm, called tail degree signature (TDS), is experimentally evaluated in this section.  
The constant parameters $\mMaxDist$ and $\mFtrLen$ are set to $2$ and $10$, respectively. 
The algorithm is implemented in Python language.

\begin{figure*}
	\centering
		\includegraphics[width=.95\textwidth]{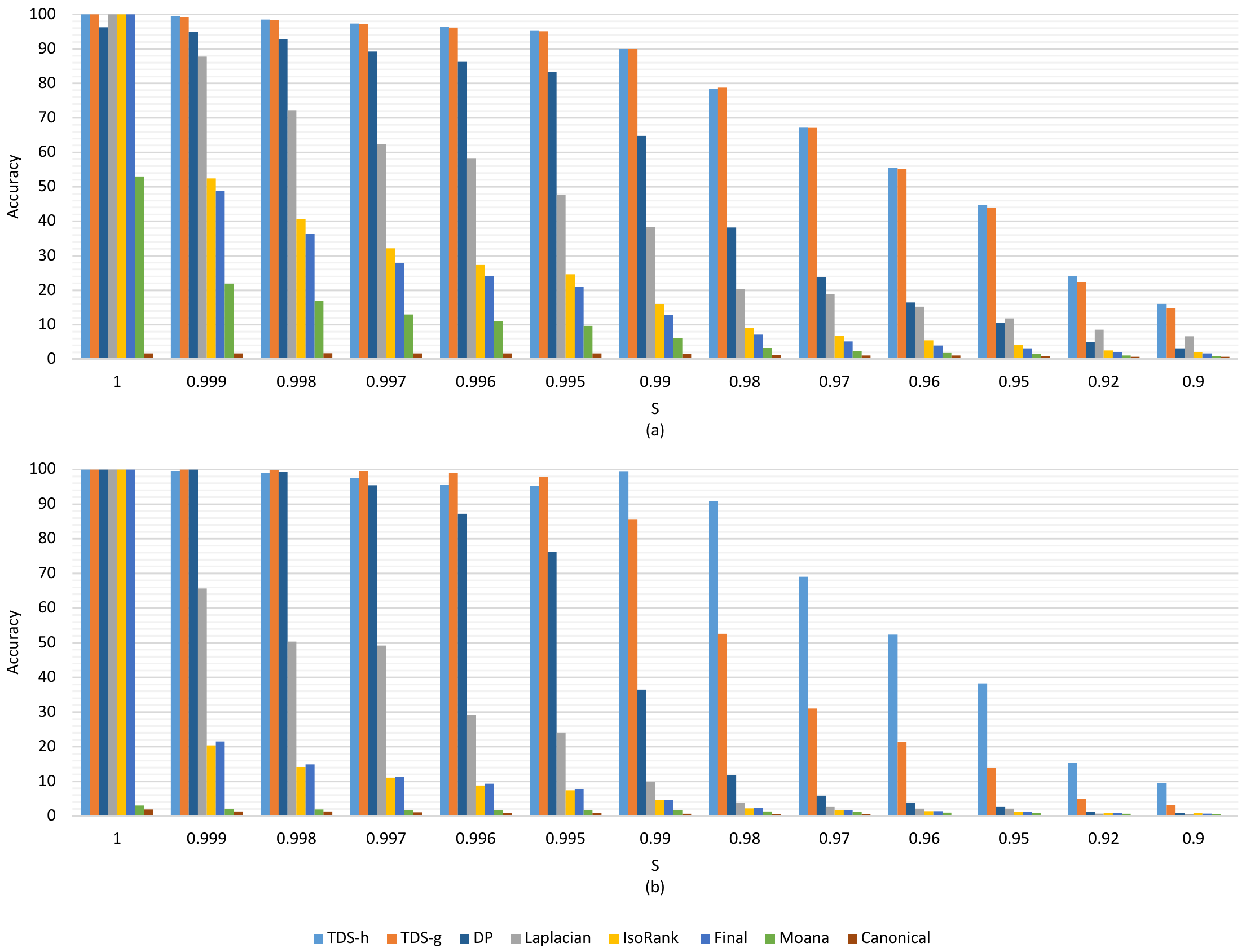}
		\caption{Accuracy of the proposed TDS algorithm and previous seedless algorithms, versus different values of $s$, for \erdos graphs with (a) $\pp = \log(\mN)/\mN$ and (b) $\pp = \log^2(\mN)/\mN$, and $\mN=1000$. The lower the value of $s$ the more difficult it becomes to match the graphs.} 
		\vskip -2mm
	\label{figaccerdos}
\end{figure*}

\subsection{Accuracy}

The two versions of TDS algorithm (TDS-h and TDS-g) are compared with recent seedless graph matching algorithms on \erdos graphs with $p=\log(n)/n$ and $p=\log^2(n)/n$. 
In particular, we consider Degree Profile (DP) \cite{degreeProfile}, Laplacian \cite{carcassoni2002alignment, leordeanu2005spectral}, Canonical labeling \cite{dai2018performance}, IsoRank \cite{singh2008global}, Final \cite{zhang2016final}, and Moana \cite{zhang2019multilevel} algorithms.

\mRefFig{figaccerdos}(a) shows accuracy of TDS algorithms and the other methods versus $s$. Every value in this figure shows the average accuracy for $50$ randomly generated \erdos graphs with $\pp = \log(\mN)/\mN$ and $\mN=1000$. 
As shown in the figure, both versions of TDS algorithm achieve much higher accuracy compared to the other algorithms. Moreover, they yield accurate solutions for lower values of $s$. 
For instance, at $s=0.98$, TDS-h and TDS-g achieve about $80\%$ accuracy, while DP and Laplacian reach about $40\%$ and $20\%$ accuracy, respectively, and the accuracy of all the other methods are less than $10\%$. 
At $s=0.95$, TDS-h and TDS-g achieve about $45\%$ accuracy, while the accuracy of all the other methods are about or less than $10\%$.

\mRefFig{figaccerdos}(b) presents the same comparisons as above for \erdos graphs with $\pp = \log^{2}(\mN)/\mN$. In this region, TDS-h has higher accuracy than TDS-g for lower values of $s$. Both versions of TDS algorithm achieve much higher accuracy compared to the other algorithms. 
For instance, at $s=0.98$, TDS-h and TDS-g achieve about $90\%$ and $50\%$ accuracy, respectively, while all the other methods are about or less than $10\%$. 

\begin{figure}[tp]
	\centering
	\includegraphics[width=.82\columnwidth]{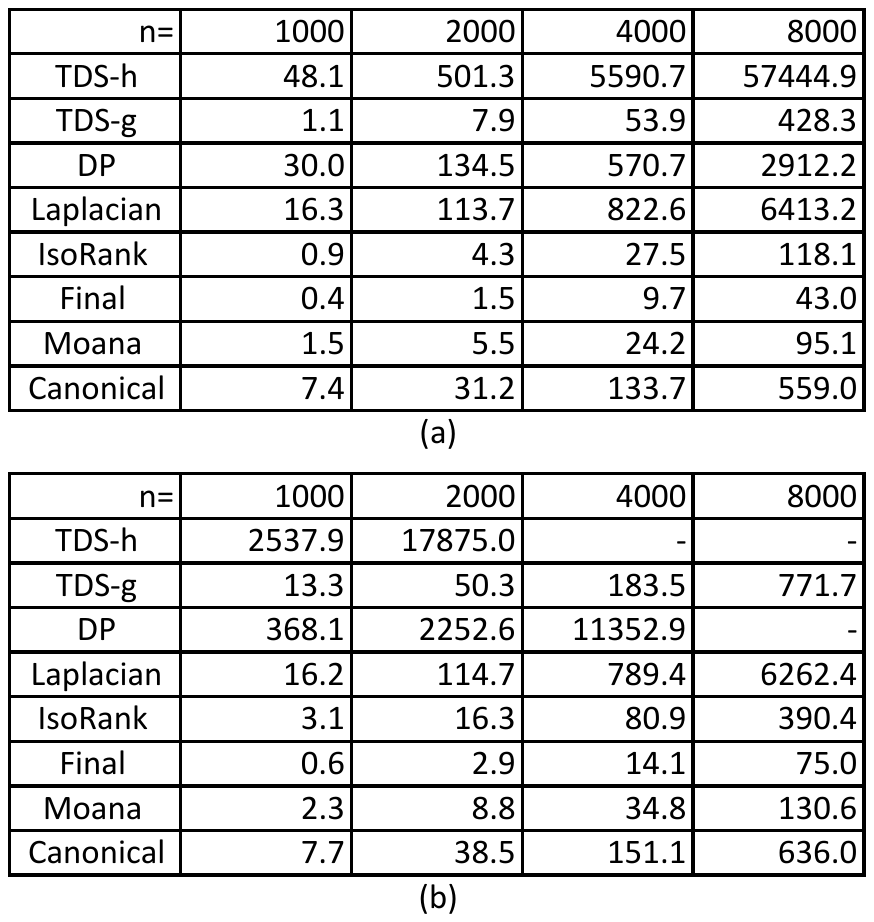} 
	\vskip -1mm
	\caption{Runtime of the proposed TDS algorithm and previous seedless algorithms (in seconds), versus different values of $\mN$, for \erdos graphs with (a) $\pp = \log(\mN)/\mN$ and (b) $\pp = \log^2(\mN)/\mN$, and $s=0.99$. }	
	\label{fig:runtime:erdos}
\end{figure}

\subsection{Runtime}

\mRefFig{fig:runtime:erdos} compares runtimes of the considered algorithms on \erdos graphs with $\pp = \log(\mN)/\mN$ and $\pp = \log^2(\mN)/\mN$ for $n=\{1000, 2000, 4000, 8000\}$ and $s=0.99$. An empty value in \mRefFig{fig:runtime:erdos} denotes that we stopped (killed) the process because the runtime exceeded 16 hours. 

As can be seen, for $\pp = \log^2(\mN)/\mN$, IsoRank, Final, and Moana have smaller runtimes compared to the other methods. TDS-g, Laplacian, and Canonical have comparable runtimes, while TDS-g have much higher accuracy (\mRefFig{figaccerdos}(b)). The runtimes of TDS-h and DP grow dramatically as the number of nodes increases.

\begin{figure*}
	\centering
	\includegraphics[width=0.6\textwidth]{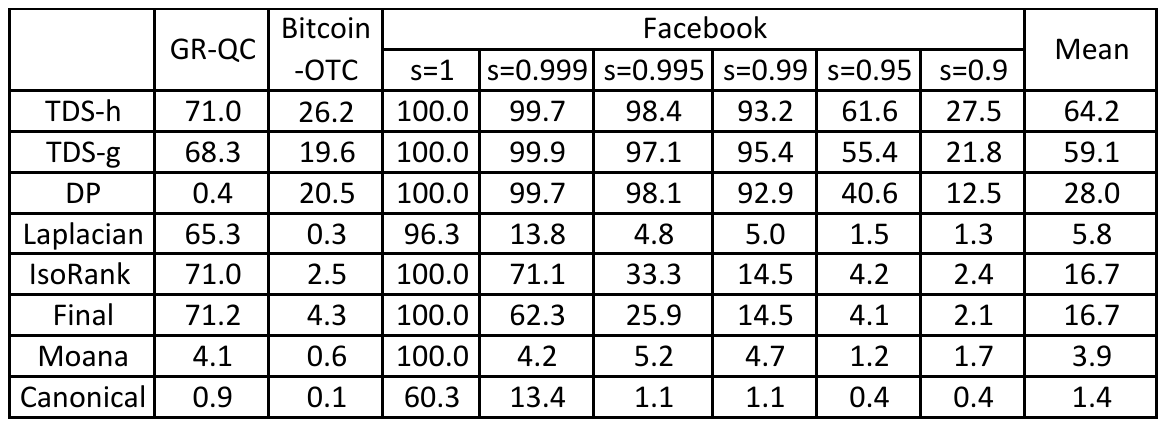} 	
	\vskip -1mm
	\caption{Comparing accuracy of TDS with previous seedless algorithms, in real-world networks. The last column shows the mean accuracy, i.e., the geometric mean of all the other columns.}	
	\label{fig:acc:real}
\end{figure*}

\subsection{Real-world Networks} 

In addition to \erdos graphs, we also evaluated the proposed TDS algorithm on the following three real-world networks and compared it with previous seedless algorithms.  

\begin{itemize}

	\item Bitcoin-OTC \cite{data-bitcoin1, data-bitcoin2}: This is who-trusts-whom network of people who traded using Bitcoin cryptocurrency on a platform called Bitcoin-OTC. It contains 5,881 nodes and 35,592 edges. Members (nodes) on this platform can rate other members (nodes) in the range -10 to 10. To use Bitcoin-OTC as a benchmark for evaluating graph matching algorithms, we consider the following two unweighted networks. The first network contains all the nodes and edges in the original graph, and the second network contains only the positive edges.

	\item GR-QC \cite{data-GR-QC}: arXiv GR-QC (General Relativity and Quantum Cosmology) collaboration network contains 5241 nodes and 11,923 edges. Each author is represented by a node. Two nodes are connected if the authors have at least one common paper on GR-QC category from January 1993 to April 2003. This dataset contains two networks which are permuted version of one another, i.e., $s=1$ \cite{data-GR-QC}.

	\item Facebook \cite{data-facebook}: This dataset contains ``friends lists" from Facebook, which was collected from a survey using a Facebook app. The dataset includes 4039 nodes and 88234 edges. For the task of network alignment, we added some noises to the Facebook network edges, i.e, we constructed a new network with the same set of nodes as Facebook network while each edge in Facebook network is preserved in the new network with probability $s$. 
\end{itemize}

In \mRefFig{fig:acc:real}, we compare the accuracy of TDS with other seedless algorithms. In almost all cases,  the proposed algorithm outperforms the other algorithms. 
For instance, in GR-QC benchmark, the accuracy is around $70\%$ in TDS-h, TDS-g,  Laplacian, IsoRank, and Final algorithms, while the other methods have less than $5\%$ accuracy. 
In Bitcoin-OTC and Facebook benchmarks, TDS-h, TDS-g, and DP have much higher accuracy compared to the other methods. 

The last column in \mRefFig{fig:acc:real} shows the geometric mean of the other columns. As it can be seen, TDS-h and TDS-g achieve the mean accuracy of about $60\%$, while the mean accuracy of DP is about $30\%$, and the other methods have less than $20\%$ accuracy.

\section{Conclusion}
In this paper, we proposed a seedless graph matching algorithm for correlated \erdos graphs. We introduced node features based on tail of degree distribution. We showed that this approach has advantages with respect to matching nodes based on center of degree distributions. Our experiments showed that the proposed algorithm outperforms other related works for several real networks as well as \erdos graphs with average degree of order $\Theta(\log(n))$ and $\Theta(\log^2(n))$.

\bibliographystyle{cas-model2-names}
\bibliography{references}

\end{document}